\newtheorem{theorem}{Theorem}
\title{Frobenius-norm-based measures of quantum coherence and asymmetry}
\author[1,2,$\ast$,$\dagger$]{Yao Yao}
\author[3,$\ast$]{G. H. Dong}
\author[4]{Xing Xiao}
\author[3,$\dagger$]{C. P. Sun}
\affil[1]{Microsystems and Terahertz Research Center, China Academy of Engineering Physics, Chengdu Sichuan 610200, China}
\affil[2]{Institute of Electronic Engineering, China Academy of Engineering Physics, Mianyang Sichuan 621999, China}
\affil[3]{Beijing Computational Science Research Center, Beijing, 100193, China}
\affil[4]{College of Physics and Electronic Information, Gannan Normal University, Ganzhou Jiangxi 341000, China}
\affil[$\dagger$]{Corresponding author: yaoyao@mtrc.ac.cn,cpsun@csrc.ac.cn}
\affil[$\ast$]{These authors contributed equally to this work}
\begin{abstract}
We formulate the Frobenius-norm-based measures for quantum coherence and asymmetry respectively.
In contrast to the resource theory of coherence and asymmetry, we construct a natural measure of quantum
coherence inspired from optical coherence theory while the group theoretical approach is employed
to quantify the asymmetry of quantum states. Besides their simple structures and explicit physical
meanings, we observe that these quantities are intimately related to the purity (or linear
entropy) of the corresponding quantum states. Remarkably, we demonstrate that the proposed coherence quantifier
is not only a measure of mixedness, but also an intrinsic (\textit{basis-independent}) quantification
of quantum coherence contained in quantum states, which can also be viewed as a normalized version of
Brukner-Zeilinger invariant information. In our context, the asymmetry of $N$-qubit
quantum systems is considered under local \textit{independent} and \textit{collective} $\mathbb{SU}(2)$ transformations.
Intriguingly, it is illustrated that the collective effect has a significant impact on the asymmetry measure,
and quantum correlation between subsystems plays a non-negligible role in this circumstance.
\end{abstract}
\begin{document}

\flushbottom
\maketitle

\thispagestyle{empty}

\section*{Introduction}
In the 20th century, quantum mechanics, as the core of quantum physics, is undoubtedly one of the most profound scientific theories
during the development process of modern science and philosophy. Notably, quantum coherence is one of the most remarkable
and characteristic traits of quantum mechanics and also viewed as the critical resource for the emerging field of quantum technologies,
such as quantum cryptography~\cite{Gisin2002}, quantum metrology~\cite{Giovannetti2011} and quantum computation~\cite{Nielsen2000}.
This intrinsic principle of quantum mechanics enforces its entire departure from classical lines of thought,
similar to the notion of quantum entanglement~\cite{Schrodinger1935}. In different occasions, the generalized concept of quantum coherence
manifests itself as quantum superposition, quantum asymmetry or non-commutability. These specific notions have gotten extensive applications
in quantum physics both theoretically and experimentally. Therefore, the characterization, quantification and application
of quantum coherence are one of the central topics in quantum information science.

Although the investigation of coherence theory has a long history in classical and quantum optics~\cite{Glauber1963,Sudarshan1963,Mandel1995,Born1999,Wolf2007},
it does not provide an integrated and unified framework for quantitative study. In some contexts, even conflicting
definitions have appeared with respect to the same concept~\cite{Karczewski1963,Wolf2003,Tervo2003,Setala2004a,Wolf2004,Setala2004b},
especially when the vectorial character of electromagnetic waves is involved. From the perspective of quantum information theory,
a rigorous framework for the characterization and quantification of coherence has been formulated very recently, based on
the quantum resource theory~\cite{Horodecki2013,Brandao2015}. On one hand, Baumgratz \textit{et al.} defined the incoherent states and
incoherent operations and further discussed the quantum coherence monotones in the constraint of a series of
axiomatic-like prerequisites~\cite{Baumgratz2014}, which is extremely similar to the approach adopted in
quantum entanglement theory~\cite{Vedral1997}. Within this framework, novel measures of quantum coherence have been proposed and
the connections between coherence and other manifestations of quantum correlations have been carefully
scrutinized~\cite{Girolami2014,Shao2015,Streltsov2015,Yao2015,Yuan2015,Xi2015,Cheng2015,Rana2016,Rastegin2016,Napoli2016,Chitambar2016a,Winter2016}.
On the other hand, Marvian and his collaborators proposed the resource theory of asymmetry which included
the former notion of quantum coherence as a special case~\cite{Gour2008,Gour2009,Marvian2013,Marvian2014a,Marvian2014b,Marvian2015}.
Moreover, the internal relations between these two approaches and their physical justifications have been further clarified lately~\cite{Marvian2016,Chitambar2016b}.

However, the story has never come to the end. The investigation of quantum coherence is now facing several crucial problems that need to be solved urgently.
First of all, we should further identify the application scopes and physical meanings of existing measures of quantum coherence. In particular, the quantifiers
presented in ref. \cite{Baumgratz2014} are established in the finite-dimensional setting and thus can not be directly applied to the infinite-dimensional systems.
Moreover, explicit physical (or operational) meanings of existing measures of quantum coherence are to be pursued by the community, though partial results
have been obtained \cite{Winter2016,Singh2015a,Bagan2015}. Secondly, as indicated by Marvian \textit{et al.},
distinct approaches of quantifying quantum coherence will lead to different conclusions in a diverse set of tasks. For instance,
the \textit{speakable} notions of coherence may not be suitable to characterize the consumed resource in the task of phase estimation \cite{Marvian2015,Marvian2016}.
Moreover, Chitambar \textit{et al.} also pointed out that all currently proposed \textit{basis-dependent} theories fail to satisfy a criterion of physical
consistency \cite{Chitambar2016b}. In addition, the interactions between quantum coherence and other manifestations of quantum correlations still
need to be uncovered \cite{Streltsov2015,Yao2015}. Finally, the quantitative relations between quantum coherence and other important
quantities in quantum information are yet to be established \cite{Singh2015b,Bera2015}.

In contrast to the resource-theory framework, we notice that other types of coherence or asymmetry measures are also of great significance
in optical coherence theory and condensed matter physics. For example, a measure of \textit{first-order coherence} $\mathcal{D}=\sqrt{2\textrm{Tr}\rho^2-1}$ (for qubit subsystem) was
exploited to introduce the concept of accessible coherence \cite{Kagalwala2013} and it was recently demonstrated that $\mathcal{D}$ can be unveiled
from hidden coherence in partially coherent light \cite{Svozilik2015}. Moreover, Fang \textit{et al.} proposed a novel measure of the degree of symmetry
by virtue of group theoretical approach and showed that this quantity can effectively detect the phenomena of accidental degeneracy and
spontaneous symmetry breaking \cite{Fang2016}. In this work, we formulate two Frobenius-norm-based measures by further extending their
previous work and more importantly, illustrate the clear physical meanings of these quantifiers. Through the establishment of relationship
with other significant physical quantities (e.g., Brukner-Zeilinger information), we emphasize that the purity of quantum states is not only
a measure of mixedness, but also a \textit{basis-independent} quantification of quantum coherence contained in quantum states.

\section*{Results}
\subsection*{Coherence measure based on Frobenius norm}
In the famous work of Mandel and Wolf, they introduced an important measure depicting the \textit{degree of polarization}
for planar electromagnetic fields \cite{Mandel1995}
\begin{align}
\mathcal{P}_2=\sqrt{1-\frac{4\textrm{det}\Phi_2}{(\textrm{Tr}\Phi_2)^2}},
\end{align}
where $\Phi_2$ represents $2 \times 2$ equal-time coherence matrix at a given point $\vec{r}$
and $[\Phi_2]_{ij}=\langle E^\ast_i(\vec{r},t)E_j(\vec{r},t)\rangle$, $i,j=x,y$.
Note that this quantity characterizes the coherence between two mutually orthogonal components of the complex electric vector
$\mathbf{E}(t)=[E_x(t),E_y(t)]^\textrm{T}$. Be aware of the Hermiticity and non-negativity of the matrix $\Phi_2$, we can define a
physically valid quantum state $\rho=\Phi_2/\textrm{Tr}\Phi_2$. Since the Bloch sphere is analogous to the Poincar\'{e} sphere,
$\mathcal{P}_2$ can be rewritten as
\begin{align}
\mathcal{P}_2=\sqrt{1-4\textrm{det}\rho}=\sqrt{2\textrm{Tr}\rho^2-1}=\lambda_+-\lambda_-=|\vec{s}|,
\end{align}
where we have used the Bloch representation $\rho=(1+\vec{s}\cdot\vec{\sigma})/2$ and $\lambda_{\pm}=(1\pm|\vec{s}|)/2$ are
eigenvalues of $\rho$. It is apparent that $\mathcal{P}_2$ is equivalent to the degree of first-order coherence $\mathcal{D}$ adopted in
refs \cite{Kagalwala2013,Svozilik2015} and exactly equal to the length of the Bloch vector $\vec{s}$. In fact, in this case (e.g., for qubit system)
$\rho$ can also be recast as
\begin{align}
\rho=\mathcal{P}_2|\psi\rangle\langle\psi|+\frac{1-\mathcal{P}_2}{2}\mathbb{I}_{2},
\end{align}
with $|\psi\rangle\langle\psi|=(1+\vec{s_0}\cdot\vec{\sigma})/2$ and $\vec{s_0}=\vec{s}/|\vec{s}|$.

Furthermore, Set\"{a}l\"{a} \textit{et al.} moved a step forward and extended the above 2D formalism into the formulation of the
3D degree of polarization coherence for arbitrary electromagnetic fields \cite{Setala2002a,Setala2002b}, that is
\begin{align}
\mathcal{P}_3=\sqrt{\frac{3}{2}\left[\frac{\textrm{Tr}{\Phi^2_3}}{(\textrm{Tr}\Phi_3)^2}-\frac{1}{3}\right]},
\end{align}
where $\Phi_3$ denotes the $3\times3$ coherence matrix at a given point $\vec{r}$
and $[\Phi_3]_{ij}=\langle E^\ast_i(\vec{r},t)E_j(\vec{r},t)\rangle$, $i,j=x,y,z$.
Similarly, by redefining $\rho=\Phi_3/\textrm{Tr}\Phi_3$, we obtain
\begin{align}
\mathcal{P}_3=\sqrt{\frac{3}{2}\left[\textrm{Tr}{\rho^2}-\frac{1}{3}\right]},
\end{align}

Actually, we observe that this formalism can be unified and generalized to arbitrary dimensional systems, motivated by the work
of Luis \cite{Luis2007}, where a $4\times4$ density matrix was considered. The key idea is to quantify quantum coherence
from the geometrical point of view, that is, the degree of coherence between distinct variables of optical fields is
assessed as the distance between the coherence matrix and the identity matrix. Note that this geometrical approach
is also employed in quantum resource theories \cite{Horodecki2013,Brandao2015,Baumgratz2014,Vedral1997}. Here the identity matrix, or maximally
mixed state in the sense of quantum information, is identified as the fully incoherent and completely unpolarized light \cite{Mandel1995,Luis2007}.
Therefore, we can construct the following general measure
\begin{align}
\mathcal{C}(\rho)=\sqrt{\frac{d}{d-1}}\|\rho-\rho_\star\|_{\textrm{F}},
\label{QC-measure}
\end{align}
where $d$ is the dimension of the Hilbert space, $\rho_\star=\mathbb{I}_{d}/d$ denotes the maximally mixed state.
and the Frobenius norm is given by $\|A\|_{\textrm{F}}=\sqrt{\textrm{Tr}(A^\dagger A)}$.
Note that here the Frobenius norm is normalized to guarantee $\mathcal{C}(\rho)\in[0,1]$ (see Methods section for more details).
It is worth pointed out that this measure inherently possesses the following desirable properties:

(i) When $d=2,3,4$, the coherence measure (\ref{QC-measure}) reduces to the existing quantifiers for 2D, 3D and 4D cases,
where for 4D formalism, this measure was also termed as the \textit{depolarization index} \cite{Gil1986,Aiello2005};

(ii) No optimization is involved in $\mathcal{C}(\rho)$, which is in sharp contrast to the coherence measures proposed in ref. \cite{Baumgratz2014}, though
the optimization is rather straightforward there;

(iii) $\mathcal{C}(\rho)$ is independent of the specific representation of $\rho$, that is,
$\mathcal{C}(\rho)$ is \textit{basis-independent};

(iv) $\mathcal{C}(\rho)$ is unitarily invariant, since
$\mathcal{C}(U\rho U^\dagger)=\mathcal{C}(\rho)$ owing to the fact that the maximally mixed state $\rho_\star$
is \textit{the only} state that remains invariant under arbitrary unitary transformation. In fact, the property (iv)
is equivalent to (iii).

(v) $\mathcal{C}(\rho)$ has an analytical expression and explicit geometrical interpretation. As we shall see below,
$\mathcal{C}(\rho)$ also has a clear operational meaning.

To further elaborate on the characteristics of $\mathcal{C}(\rho)$, we first present several observations
where this coherence measure is involved.

\textit{Observation I:} Straightforward calculation shows that the square of $\mathcal{C}(\rho)$ is directly proportional to the
celebrated \textit{Brukner-Zeilinger invariant information} \cite{Brukner1999}
\begin{align}
\mathcal{C}^2(\rho)=\frac{d}{d-1}\left(\mathrm{Tr}\rho^2-\mathrm{Tr}\rho_\star^2\right)=\frac{d}{d-1}\left(\mathrm{Tr}\rho^2-\frac{1}{d}\right)
=\frac{d}{d-1}\mathcal{I}_{BZ}(\rho),
\end{align}
where the Brukner-Zeilinger information $\mathcal{I}_{BZ}$ is an operational notion defined as the sum of individual measures of information
over a complete set of mutually complementary observables (MCO) \cite{Brukner2001}
\begin{align}
\mathcal{I}_{BZ}(\rho)\doteq\sum_{i=1}^m\sum_{j=1}^d\left(\mathrm{Tr}(\Pi_{ij}\rho)-\frac{1}{d}\right)^2=\mathrm{Tr}\rho^2-\frac{1}{d}.
\end{align}
Here $i=1,\ldots,m$ and $j=1,\ldots,d$ label complementary observables and their eigenvectors, respectively. A significant feature
of $\mathcal{I}_{BZ}$ is its independence of the specific choice of the measured set of MCO and this independence is also equivalent to
the unitary invariance of $\mathcal{I}_{BZ}(U\rho U^\dagger)=\mathcal{I}_{BZ}(\rho)$, since the unitary transformations do not alter
the eigenvalues of $\rho$ and $\mathrm{Tr}\rho^2$ only relies on them. Therefore, it is evident that $\mathcal{C}(\rho)$ (and
$\mathcal{I}_{BZ}$) is a measure of \textit{basis-independent} coherence contained in quantum states and quantifies the intrinsic randomness
irrespective of the amount of entanglement between subsystems (if there are any). It is worth noting that the (global) purity
does not solely determine the entanglement of a multi-partite state and this fact led to the investigation of the so-called
\textit{maximally entangled mixed states} for a given degree of purity \cite{Munro2001,Wei2003}. Moreover, all pure states should always be
represented as a coherent superposition of a certain set of basis states rather than a classical mixing, and by our definition
$\mathcal{C}(\rho)=1$ for any pure state. This is reminiscent of the argument by von
Neumann that the entropy of all pure states is defined to be 0 as a kind of normalization \cite{Neumann1932}.
Finally, since the Brukner-Zeilinger invariant information was successfully utilized in quantum teleportation \cite{Lee2000},
state estimation \cite{Rehacek2002} and the violation of Bell's inequalities \cite{Brukner2002}, $\mathcal{C}(\rho)$,
as a renormalized version of $\mathcal{I}_{BZ}$, also plays an important role in all these quantum information tasks.

\textit{Observation II:} The coherence measure $\mathcal{C}(\rho)$ is associated with entropy production problem of doubly-stochastic
(bistochastic) quantum channels. By use of the quantum version of Kullback inequality (or a stronger Pinsker inequality \cite{Hiai1981})
\begin{align}
S(\rho\parallel\sigma)=\mathrm{Tr}(\rho\ln\rho-\rho\ln\sigma)\geq \frac{1}{2}\mathrm{Tr}^2|\rho-\sigma|\geq \frac{1}{2}\|\rho-\sigma\|_{\mathrm{F}}^2,
\label{Kullback}
\end{align}
Streater proved the following theorem \cite{Streater1985}:

\begin{theorem}
Let $\mathcal{H}$ be a Hilbert space with dim$\mathcal{H}=d<\infty$ and denote by $\mathcal{B}(\mathcal{H})_2$ the Hilbert space of operator on $\mathcal{H}$ with scalar product $\langle A,B\rangle=\mathrm{Tr}(A^\dagger B)$. If $T: \mathcal{B}(\mathcal{H})_2\rightarrow\mathcal{B}(\mathcal{H})_2$ is a bistochastic channel which is ergodic and has spectral gap
$\gamma\in[0,1)$ (e.g., up to a scalar multiple, the identity matrix $\mathbb{I}$ is the only fixed point of $T$ in $\mathcal{B}(\mathcal{H})_2$, and the spectrum
of the channel $T^\dagger\circ T$ is contained in the set $[0,1-\gamma]\cup\{1\}$), then for any density matrix $\rho$
\begin{align}
S(T(\rho))-S(\rho)\geq\frac{\gamma}{2}\|\rho-\rho_\star\|_{\textrm{F}}^2,
\end{align}
where $S(\rho)=-\mathrm{Tr}(\rho\ln\rho)$ is the von Neumann entropy of $\rho$.
\label{T1}
\end{theorem}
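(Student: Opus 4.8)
The plan is to reduce the asserted one-step entropy production bound to an inequality comparing the entropy gained against the purity lost, and then to prove that inequality directly from the doubly-stochastic structure underlying a bistochastic channel, via a strong-convexity estimate for $t\mapsto t\ln t$.

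First I would dispose of the spectral-gap input. For any density matrix $\mu$ one has $\|\mu-\rho_\star\|_{\mathrm F}^2=\mathrm{Tr}\,\mu^2-1/d$, and, since $T$ is unital, $T\rho_\star=\rho_\star$ while $\rho-\rho_\star$ is traceless, hence orthogonal to $\mathbb{I}$ in $\mathcal{B}(\mathcal{H})_2$. Reading the hypothesis as: the eigenvalue $1$ of $T^\dagger\!\circ T$ is non-degenerate with eigenvector $\mathbb{I}$ and the rest of its spectrum lies in $[0,1-\gamma]$, we obtain
\begin{align}
\|T\rho-\rho_\star\|_{\mathrm F}^2=\big\langle\rho-\rho_\star,\,(T^\dagger\!\circ T)(\rho-\rho_\star)\big\rangle\le(1-\gamma)\,\|\rho-\rho_\star\|_{\mathrm F}^2,
\end{align}
equivalently $\mathrm{Tr}\,\rho^2-\mathrm{Tr}(T\rho)^2\ge\gamma\,\|\rho-\rho_\star\|_{\mathrm F}^2$. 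It therefore suffices to prove
\begin{align}
S(T\rho)-S(\rho)\ \ge\ \tfrac12\big(\mathrm{Tr}\,\rho^2-\mathrm{Tr}(T\rho)^2\big).
\label{streaterkey}
\end{align}
Since $T$ is trace-preserving and unital one also has $S(T\rho)-S(\rho)=S(\rho\|\rho_\star)-S(T\rho\|\rho_\star)$, so \eqref{streaterkey} is a purity-weighted refinement of the Kullback--Pinsker bound quoted above.

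To prove \eqref{streaterkey} I would diagonalize both states, $\rho=\sum_j\lambda_j|f_j\rangle\langle f_j|$ and $T\rho=\sum_l\mu_l|e_l\rangle\langle e_l|$, and introduce the transfer matrix $D_{lj}:=\langle e_l|\,T(|f_j\rangle\langle f_j|)\,|e_l\rangle$. Positivity of $T$ gives $D_{lj}\ge0$, trace preservation gives $\sum_l D_{lj}=1$, and unitality applied to $\sum_j|f_j\rangle\langle f_j|=\mathbb{I}$ gives $\sum_j D_{lj}=1$; thus $D$ is doubly stochastic and $\mu_l=\sum_j D_{lj}\lambda_j$. Now $\phi(t)=t\ln t$ satisfies $\phi''(t)=1/t\ge1$ on $(0,1]$, i.e.\ $\phi$ is $1$-strongly convex on $[0,1]$, the interval containing all the $\lambda_j$, so the Jensen-gap estimate for the $l$-th row of $D$ reads $\sum_j D_{lj}\phi(\lambda_j)-\phi(\mu_l)\ge\tfrac12\sum_j D_{lj}(\lambda_j-\mu_l)^2$. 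Summing over $l$ and using the row/column stochasticity of $D$ together with $\mu_l=\sum_j D_{lj}\lambda_j$, the left side collapses to $\mathrm{Tr}(\rho\ln\rho)-\mathrm{Tr}(T\rho\ln T\rho)=S(T\rho)-S(\rho)$ and the right side to $\tfrac12\big(\sum_j\lambda_j^2-\sum_l\mu_l^2\big)=\tfrac12\big(\mathrm{Tr}\,\rho^2-\mathrm{Tr}(T\rho)^2\big)$, which is exactly \eqref{streaterkey}; combined with the spectral-gap estimate this proves the theorem. (Equivalently: $D$ doubly stochastic means the spectrum of $T\rho$ is majorized by that of $\rho$, and $\mu\mapsto S(\mu)+\tfrac12\mathrm{Tr}\,\mu^2$ is Schur-concave because $x\mapsto-x\ln x+\tfrac12x^2$ is concave on $[0,1]$.)

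The main obstacle is \eqref{streaterkey} with the precise constant $\tfrac12$: majorization plus concavity of the von Neumann entropy immediately give $S(T\rho)\ge S(\rho)$, but only ``$\ge 0$,'' whereas the theorem needs the exact surplus that, once the spectral gap is inserted, becomes $\tfrac\gamma2$. What pins down the constant is that the modulus of strong convexity of $t\ln t$ on $[0,1]$ equals $1$ --- exactly because density-matrix eigenvalues never exceed $1$ --- so the Jensen/Pinsker-type gap is bounded below by $\tfrac12$ times the variance of the eigenvalue distribution under $D$, i.e.\ $\tfrac12$ times the decrease in purity. A subsidiary point I would handle carefully is the reading of the ``ergodic $+$ spectral gap'' hypothesis: one needs the eigenvalue $1$ of $T^\dagger\!\circ T$ to be genuinely non-degenerate (spanned by $\mathbb{I}$, which requires absence of nontrivial peripheral spectrum of $T$, not merely uniqueness of the fixed point), so that the contraction factor $1-\gamma$ really applies to the traceless component $\rho-\rho_\star$.
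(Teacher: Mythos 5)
Your argument is correct, and it takes a genuinely different route from the one the paper relies on. The paper does not reprove Theorem~\ref{T1}: it displays the quantum Kullback--Pinsker inequality $S(\rho\|\sigma)\ge\tfrac12\|\rho-\sigma\|_{\mathrm{F}}^2$ and then defers to Streater's original relative-entropy argument. You instead split the claim into two independent facts: (a) the spectral-gap hypothesis, applied to the traceless component $\rho-\rho_\star$ (which is orthogonal to the fixed vector $\mathbb{I}$ of $T^\dagger\circ T$), forces the purity to drop by at least $\gamma\,\|\rho-\rho_\star\|_{\mathrm{F}}^2$; and (b) for \emph{any} bistochastic channel the entropy gain is at least half the purity loss, proved by exhibiting the doubly stochastic transfer matrix $D_{lj}=\langle e_l|T(|f_j\rangle\langle f_j|)|e_l\rangle$ between the eigenbases of $\rho$ and $T(\rho)$ and using that $t\ln t$ is $1$-strongly convex on $[0,1]$ (equivalently, that $\mu\mapsto S(\mu)+\tfrac12\mathrm{Tr}\,\mu^2$ is Schur-concave). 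Every step checks out: $D$ is doubly stochastic by positivity, trace preservation and unitality; the Jensen-gap estimate degenerates harmlessly when $\mu_l=0$; and the row/column sums collapse the two sides to exactly $S(T\rho)-S(\rho)$ and $\tfrac12\bigl(\mathrm{Tr}\rho^2-\mathrm{Tr}(T\rho)^2\bigr)$. What your route buys is a self-contained elementary proof that never invokes Pinsker, cleanly isolates where the spectral gap enters (only in the purity contraction, hence linearly in $\gamma$), and produces the intermediate inequality (b) --- a quantitative H-theorem for unital channels --- whose majorization by-product $\mathrm{spec}(T\rho)\prec\mathrm{spec}(\rho)$ also reproves Theorem~\ref{T2} of the paper. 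You are likewise right to insist on the stricter reading of the hypothesis, namely that $1$ is a \emph{non-degenerate} eigenvalue of $T^\dagger\circ T$ spanned by $\mathbb{I}$: as literally written, an ergodic unitary conjugation (for which $T^\dagger\circ T$ is the identity superoperator) satisfies the stated spectral condition for any $\gamma$ while producing zero entropy, so only under your reading is the contraction in step (a), and hence the theorem, valid.
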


From Theorem \ref{T1}, we find that two separated terms contribute to the lower bound of the entropy production. The first term
$\|\rho-\rho_\star\|_{\textrm{F}}^2$ is just the unnormalized version of $\mathcal{C}(\rho)$, which characterizes the intrinsic
coherence independent of the choice of a specific basis, while the spectral gap $\gamma$ relies on representations of both $T$ and $\rho$.
To further elucidate the role of the bistochastic channel $T$, we consider the thermodynamics problem raised in ref. \cite{Kammerlander2016},
where the the bistochastic channel was taken to be a projection measurement $\rho\rightarrow\Delta(\rho):=\sum_k\Pi_k\rho\Pi_k$.
If $\{|l\rangle\}_l$ is the eigenbasis of $\rho$ and $\{\Pi_k=|\phi_k\rangle\langle\phi_k|\}_k$ constitute another orthogonal basis,
then $\gamma$ is the second smallest eigenvalue of the matrix $\mathbb{I}_d-M^\mathrm{T}M$, where $M$ is a bistochastic matrix
with the entries $M_{kl}=\langle l|\Pi_k| l\rangle=|\langle\phi_k| l\rangle|^2$. It is easy to show that $\gamma$ is zero if
$\{|l\rangle\}_l$ and $\{|\phi_k\rangle\}_k$ are the same, while $\gamma=1$ if they are mutually unbiased.
A deeper insight can be gained by noticing that once quantum thermodynamics is viewed as a resource theory, it can be cast as
a \textit{hybrid} of the resource theory of purity and the resource theory of asymmetry \cite{Horodecki2013}.
In particular, if $\{\Pi_k\}_k$ are the eigenvectors of an observable $L$, $\gamma$ (e.g., second smallest eigenvalue of $\mathbb{I}_d-M^\mathrm{T}M$)
is a quantifier to characterize to what extent $\rho$ fails to commute with $L$ and $\gamma\neq0$ if and only if $\rho$
has some coherence over the eigenspaces of $L$. This line of thought coincides with that of Marvian \textit{et al.} \cite{Marvian2014b}.

\textit{Observation III:} For a square $d$-dimensional matrix $A$ and $p\in[1,+\infty)$, the $l_p$ norms (or vector norms)
and Schatten-$p$ norms are defined as \cite{Watrous2011}
\begin{align}
\|A\|_{l_p}:=\left(\sum_{i,j=1}^d|A_{ij}|^p\right)^{1/p}, \,
\|A\|_{p}:=\left(\mathrm{Tr}|A|^p\right)^{1/p}=\left(\sum_{i=1}^r\sigma_i^p\right)^{1/p}=\|\vec{s}(A)\|_{p},
\end{align}
where $|A|=\sqrt{A^\dagger A}$, $r=rank(A)$ and $\vec{s}(A)$ is used to refer to the vector of nonzero singular values
$\vec{s}(A)=(\sigma_1,\ldots,\sigma_r)$. In addition to the monotonicity $\|A\|_1\geq\|A\|_p\geq\|A\|_q\geq\|A\|_\infty$
for $1\leq p\leq q\leq\infty$, we can prove the following inequality by using H\"{o}lder's inequality \cite{Bhatia1997}
\begin{align}
\|A\|_{p}\leq d^{\frac{1}{p}-\frac{1}{q}}\|A\|_{q},\quad \forall \, q\geq p\geq 1.
\label{inequality1}
\end{align}
Note that the monotonicity and the above inequality (\ref{inequality1}) is also satisfied by $l_p$ norms, and
the Schatten-$p$ norms of $A$ coincides with the ordinary (vector) $l_p$ norms of $\vec{s}(A)$. In particular,
the Frobenius norm of $A$ coincides with the corresponding $l_2$ norm. Therefore, $\mathcal{C}(\rho)$
gives a upper bound of $l_1$ norm coherence and trace-norm coherence discussed in ref. \cite{Baumgratz2014} and \cite{Rana2016} respectively
\begin{align}
C_{l_1}(\rho)&=\min_{\delta\in\mathcal{I}}\|\rho-\delta\|_{l_1}\leq\|\rho-\rho_\star\|_{l_1}\leq\sqrt{d^2}\|\rho-\rho_\star\|_{l_2}=\sqrt{d(d-1)}\mathcal{C}(\rho),\label{inequality2}\\
C_{\mathrm{tr}}(\rho)&=\min_{\delta\in\mathcal{I}}\|\rho-\delta\|_{\mathrm{Tr}}\leq\|\rho-\rho_\star\|_{\mathrm{Tr}}\leq\sqrt{d}\|\rho-\rho_\star\|_{\mathrm{F}}
=\sqrt{d-1}\mathcal{C}(\rho).\label{inequality3}
\end{align}
where $\mathcal{I}$ denotes the set of diagonal states defined in a pre-fixed basis. It is not surprising that
the upper bounds (\ref{inequality2}) and (\ref{inequality3}) are not tight in general though $C_{l_1}(\rho)$ and $C_{\mathrm{tr}}(\rho)$
indeed achieve their maximum values with the closest state $\delta=\rho_\star$.
In fact, $C_{l_1}^{\mathrm{max}}(\rho)=d-1$ and $C_{\mathrm{tr}}^{\mathrm{max}}(\rho)=2(1-1/d)$, which is satisfied by
the maximally coherent state \cite{Baumgratz2014}.

On the other hand, it is intuitive to assume that the intrinsic coherence measure $\mathcal{C}(\rho)$ may have a natural
relationship with the resource theory of purity \cite{Horodecki2003}, due to the fact that $\mathcal{C}(\rho)$ is proportional to
$\mathcal{I}_{BZ}(\rho)=\mathrm{Tr}\rho^2-1/d$. Actually, $\mathcal{I}_{BZ}(\rho)$ provide a lower bound for the so-called
\textit{unique measure of information} introduced in ref. \cite{Horodecki2002}. By virtue of the quantum Kullback inequality (\ref{Kullback}),
we have
\begin{align}
\mathcal{I}_{BZ}(\rho)/2=\|\rho-\rho_\star\|_{\textrm{F}}^2/2\leq S(\rho\|\rho_\star)=\log_2d-s(\rho),
\end{align}
where $I(\rho)=\log_2d-s(\rho)$ is a unique measure for information, in the sense that $I(\rho)=\log_2d-s(\rho)$ is equal to
the optimal transition rate for mixed states to one qubit pure state $\pi$ (i.e., $\pi=|0\rangle\langle0|$) \cite{Horodecki2013,Horodecki2003}.
However, since the Kullback inequality is not very sharp, this lower bound is rather loose, especially for the states with high purity.

\subsection*{Cohering (Purifying) power of quantum channels}
With the framework of Baumgratz \textit{et al.} \cite{Baumgratz2014}, we proposed the notion of cohering power of quantum channels
(i.e., completely positive and trace preserving maps)\cite{Yao2015}, which has been further explored in ref. \cite{Mani2015}.
In the context of this work, we can introduce a similar quantity to characterize the cohering (purifying) capacity of quantum channels
\begin{align}
\mathcal{C}(\mathcal{E})=\sqrt{\frac{d}{d-1}}\|\mathcal{E}(\rho_\star)-\rho_\star\|_{\textrm{F}}.
\label{CC-measure}
\end{align}
Complete positivity of $\mathcal{E}(\cdot)$ enables a Kraus decomposition
\begin{align}
\mathcal{E}(\rho)=\sum_\mu K_\mu\rho K_\mu^\dagger, \quad \sum_\mu K_\mu^\dagger K_\mu=\mathbb{I}_d.
\end{align}
where the second relation stems from the trace preserving property. Thus, the cohering power (\ref{CC-measure})
can be rewritten as
\begin{align}
\mathcal{C}(\mathcal{E})=\sqrt{\frac{1}{d(d-1)}\mathrm{Tr}\left[\left(\sum_\mu\left[K_\mu,K_\mu^\dagger\right]\right)^2\right]}.
\end{align}
It is observed that the operator $\sum_\mu[K_\mu,K_\mu^\dagger]$ is Hermitian and traceless. Moreover, $\mathcal{C}(\mathcal{E})=0$
if and only if $\mathcal{E}(\cdot)$ is a unital quantum channel, i.e., doubly-stochastic (bistochastic) completely positive map
or $\mathcal{E}(\rho_\star)=\rho_\star$ \cite{Mendl2009}.
Therefore, the cohering capacity $\mathcal{C}(\mathcal{E})$ can be recognized as a measure of \textit{nonunitality} of quantum channels.
Note that the nonunital channels also play a crucial role in the local creation of quantum correlations \cite{Streltsov2011}.
Within the scope of resource theories \cite{Horodecki2013,Brandao2015}, we \textit{may} make a correspondence in our context:
the only free state is the maximally mixed state $\rho_\star$, the free operations are unital channels and
the resources are states with $\mathcal{C}(\rho)>0$. The justification of such a correspondence relies on the monotonicity
of the function $\mathcal{C}(\rho)$ under unital channels. The following theorem proves that it is the case.

\begin{theorem}
For any unital (i.e, bistochastic) quantum channel, it holds that $\mathcal{C}(\mathcal{E}(\rho))\leq\mathcal{C}(\rho)$.
\label{T2}
\end{theorem}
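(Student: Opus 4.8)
The plan is to reduce the claimed monotonicity to the fact that a unital channel contracts the Frobenius norm on Hermitian operators. Since $\mathcal{C}(\sigma)=\sqrt{d/(d-1)}\,\|\sigma-\rho_\star\|_{\textrm{F}}$ is a strictly increasing function of $\|\sigma-\rho_\star\|_{\textrm{F}}$, it is enough to prove $\|\mathcal{E}(\rho)-\rho_\star\|_{\textrm{F}}\leq\|\rho-\rho_\star\|_{\textrm{F}}$. The unital hypothesis enters immediately: $\mathcal{E}(\rho_\star)=\rho_\star$, so $\mathcal{E}(\rho)-\rho_\star=\mathcal{E}(\rho-\rho_\star)$, and with $A:=\rho-\rho_\star$ (Hermitian and traceless) the goal becomes $\|\mathcal{E}(A)\|_{\textrm{F}}^2=\textrm{Tr}\,\mathcal{E}(A)^2\leq\textrm{Tr}\,A^2=\|A\|_{\textrm{F}}^2$. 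Equivalently, this is the statement that the purity $\textrm{Tr}\,\rho^2$ cannot increase under a unital channel, which is the monotonicity one expects on physical grounds.

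For the contraction step I would invoke the operator Cauchy--Schwarz (Kadison--Schwarz) inequality for a unital completely positive map, $\mathcal{E}(A^2)\geq\mathcal{E}(A)^2$, valid because $\mathcal{E}$ is unital and $2$-positive. Taking the trace of both sides and using trace preservation, $\textrm{Tr}\,\mathcal{E}(A)^2\leq\textrm{Tr}\,\mathcal{E}(A^2)=\textrm{Tr}\big(A^2\sum_\mu K_\mu^\dagger K_\mu\big)=\textrm{Tr}\,A^2$, which is exactly what is needed. If one prefers a self-contained argument, the same conclusion follows by making the Stinespring dilation explicit: writing $\mathcal{E}(X)=\sum_\mu K_\mu X K_\mu^\dagger$ with $\sum_\mu K_\mu^\dagger K_\mu=\mathbb{I}_d$ (trace preservation) and $\sum_\mu K_\mu K_\mu^\dagger=\mathbb{I}_d$ (unitality), set $W:=\sum_\mu K_\mu^\dagger\otimes|\mu\rangle:\mathcal{H}\to\mathcal{H}\otimes\mathbb{C}^m$. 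Unitality gives $W^\dagger W=\mathbb{I}_d$, so $W$ is an isometry, $P:=WW^\dagger$ is an orthogonal projection, and a direct computation gives $\mathcal{E}(X)=W^\dagger(X\otimes\mathbb{I}_m)W$. Then, using $0\leq P\leq\mathbb{I}$ (hence $\textrm{Tr}(PQ)\leq\textrm{Tr}\,Q$ for any $Q\geq0$), one finds $\textrm{Tr}\,\mathcal{E}(A)^2=\textrm{Tr}\big(P\,\tilde A\,P\,\tilde A\big)\leq\textrm{Tr}\big(P\,\tilde A^2\big)=\textrm{Tr}\big(W^\dagger\tilde A^2W\big)=\textrm{Tr}\,\mathcal{E}(A^2)=\textrm{Tr}\,A^2$, with $\tilde A:=A\otimes\mathbb{I}_m$. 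Either way, $\mathcal{C}(\mathcal{E}(\rho))\leq\mathcal{C}(\rho)$.

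I do not anticipate a genuine obstacle, but two points need care. First, one cannot take the shortcut of writing $\mathcal{E}$ as a convex combination $\sum_i p_i U_i(\cdot)U_i^\dagger$ of unitary conjugations: such a representation exists for $d=2$ (Birkhoff) but fails for $d\geq3$, so the proof must really use the Kadison--Schwarz/isometry structure of $\mathcal{E}$. Second, both hypotheses are essential and are used at different places: unitality is what lets us pull $\rho_\star$ through $\mathcal{E}$ and what makes $W$ an isometry, while trace preservation is what gives the final identity $\textrm{Tr}\,\mathcal{E}(A^2)=\textrm{Tr}\,A^2$. Finally, it is worth noting that this monotonicity, combined with the earlier fact that $\mathcal{C}(\mathcal{E})=0$ precisely for unital channels, is exactly what legitimizes the resource-theoretic reading proposed above, in which $\rho_\star$ is the unique free state and the unital channels are the free operations.
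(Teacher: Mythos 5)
Your proof is correct, and your reduction to purity contraction --- $\mathcal{C}$ is a monotone function of $\mathrm{Tr}\,\rho^2$, so it suffices to show $\mathrm{Tr}\,\mathcal{E}(\rho)^2\leq\mathrm{Tr}\,\rho^2$ for unital CPTP $\mathcal{E}$ --- is the same first step as the paper's (the paper phrases it via $\mathcal{C}^2(\rho)\propto\mathrm{Tr}\rho^2-1/d$ rather than via $\mathcal{E}(\rho)-\rho_\star=\mathcal{E}(\rho-\rho_\star)$, but these are equivalent). Where you genuinely diverge is the contraction step. The paper works entirely at the level of Kraus operators: it writes $\mathrm{Tr}\,\mathcal{E}(\rho)^2=\sum_{\mu,\nu}\mathrm{Tr}(\rho A_{\mu\nu}\rho A_{\mu\nu}^\dagger)$ with $A_{\mu\nu}=K_\mu^\dagger K_\nu$, observes that unitality and trace preservation together give $\sum_{\mu,\nu}A_{\mu\nu}A_{\mu\nu}^\dagger=\sum_{\mu,\nu}A_{\mu\nu}^\dagger A_{\mu\nu}=\mathbb{I}_d$, and then bounds each term by the scalar Cauchy--Schwarz inequality for the Hilbert--Schmidt inner product followed by the AM--GM inequality. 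You instead invoke the operator Kadison--Schwarz inequality $\mathcal{E}(A)^2\leq\mathcal{E}(A^2)$ (or reprove it on the spot via the isometry $W$ and the projection $P=WW^\dagger$, using $\mathrm{Tr}(PQ)\leq\mathrm{Tr}\,Q$ for $Q\geq0$) and take the trace. Both arguments are valid and both use unitality and trace preservation in essentially the same places. Your route is shorter and more structural --- it makes transparent that the mechanism is compression by a projection in the dilated space, and it generalizes immediately to any unital 2-positive trace-preserving map --- at the cost of importing (or rederiving) Kadison--Schwarz; the paper's route is more elementary, needing only the scalar Cauchy--Schwarz inequality, and as a by-product exhibits the auxiliary doubly-stochastic channel built from the $A_{\mu\nu}$. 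Your side remarks (the failure of Birkhoff's theorem for $d\geq3$, hence the need to go beyond convex mixtures of unitaries, and the accounting of where each hypothesis enters) are accurate and not present in the paper.
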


\begin{proof}
Since $\mathcal{C}(\rho)\propto\mathrm{Tr}\rho^2-\mathrm{Tr}\rho_\star^2$, it is sufficient to prove
$\mathrm{Tr}(\mathcal{E}^2(\rho))\leq\mathrm{Tr}\rho^2$. Consider the Kraus decomposition of $\mathcal{E}(\cdot)=\sum_\mu K_\mu(\cdot) K_\mu^\dagger$ and we have
\begin{align}
\mathrm{Tr}(\mathcal{E}^2(\rho))&=\mathrm{Tr}(\sum_{\mu,\nu}K_\mu\rho K_\mu^\dagger K_\nu\rho K_\nu^\dagger)  \nonumber\\
&=\mathrm{Tr}(\sum_{\mu,\nu}\rho K_\mu^\dagger K_\nu\rho K_\nu^\dagger K_\mu)  \nonumber\\
&=\mathrm{Tr}(\sum_{\mu,\nu}\rho A_{\mu\nu}\rho A_{\mu\nu}^\dagger),
\end{align}
where we have defined $A_{\mu\nu}=K_\mu^\dagger K_\nu$. It is interesting to see that $\{A_{\mu\nu}\}$ also constitute a unital quantum channel
$A(\cdot)=\sum_{\mu,\nu}A_{\mu\nu}(\cdot) A_{\mu\nu}^\dagger$ due to the identities
\begin{align}
\sum_\mu A_{\mu\nu}^\dagger A_{\mu\nu}=\sum_\mu K_\nu^\dagger K_\mu  K_\mu^\dagger K_\nu =K_\nu^\dagger K_\nu, \\
\sum_\nu A_{\mu\nu}A_{\mu\nu}^\dagger=\sum_\nu K_\mu^\dagger K_\nu K_\nu^\dagger K_\mu=K_\mu^\dagger K_\mu,
\end{align}
where the unitality of $\mathcal{E}(\cdot)$ is used. With the aid of Chauchy-Schwarz inequality for operators
$|\mathrm{Tr}(A^\dagger B)|\leq \|A\|_p\|B\|_q$ for $p^{-1}+q^{-1}=1$ (e.g., $p=q=2$ in our case), we finally obtain
\begin{align}
\mathrm{Tr}(\mathcal{E}^2(\rho))&=\sum_{\mu,\nu}\mathrm{Tr}(\rho A_{\mu\nu}\rho A_{\mu\nu}^\dagger)
\leq\sum_{\mu,\nu}\sqrt{\mathrm{Tr}(\rho A_{\mu\nu} A_{\mu\nu}^\dagger\rho)}\sqrt{\mathrm{Tr}(A_{\mu\nu}\rho\rho A_{\mu\nu}^\dagger)}  \nonumber\\
&\leq\frac{1}{2}\sum_{\mu,\nu}\left[\mathrm{Tr}(\rho A_{\mu\nu} A_{\mu\nu}^\dagger\rho)+\mathrm{Tr}(A_{\mu\nu}\rho\rho A_{\mu\nu}^\dagger)\right]  \nonumber\\
&=\mathrm{Tr}\rho^2.
\end{align}
where the unitality property of $A(\cdot)$ is applied.
\end{proof}
We have attempted to check whether the unitality of quantum channels is also a necessary condition for the monotonicity of $\mathcal{C}(\rho)$.
Note that for finite-dimensional Hilbert spaces, a dynamical semigroup is strictly purity-decreasing if and only if the Lindblad generator is unital \cite{Lidar2006}.
However, actually the Kraus or operator-sum representation is more general than the master equation approach \cite{Nielsen2000}. Therefore,
the necessity of unitality of operator-sum representation for strictly-decreasing purity is still an important unsolved problem.
See Method section for more information.

\subsection*{Asymmetry measure based on Frobenius norm}
In realistic physical scenarios, there are some states which are invariant under a set of symmetric operations or
a given symmetric group. This fact tells us that a specific group of symmetric transformations assigns a
specific type of symmetry to the underlying systems. A state $\rho$ is called symmetric (or $G$-invariant) with respect
to a finite or compact Lie group $G$ if it satisfies \cite{Marvian2013,Marvian2014b,Vaccaro2008}
\begin{align}
\mathcal{U}_g(\rho)=U(g)\rho U^\dagger(g)=\rho, \quad \forall \, g\in G,
\end{align}
where $U(g)$ denotes a unitary representation corresponding to the group element $g\in G$.
We notice that this definition of symmetry is equivalent to other two criterions
\begin{align}
\mathcal{G}_G(\rho)=\frac{1}{|G|}\sum_{g\in G} U(g)\rho U^\dagger(g)=\rho \stackrel{(i)}{\Longleftrightarrow}  \mathcal{U}_g(\rho)=\rho, \, \forall \, g\in G
\stackrel{(ii)}{\Longleftrightarrow} [U(g),\rho]=0,\, \forall \, g\in G,
\end{align}
where the summation in $\mathcal{G}_G(\rho)$ will be replaced by integral over $dg$ when a compact Lie group is considered.
The criterion (i) induced a nature entropic measure of the asymmetry of $\rho$ with respect to $G$ \cite{Marvian2014b,Vaccaro2008},
(i.e., $A_G(\rho)=S(\mathcal{G}_G(\rho))-S(\rho)$), which was also proved to be a measure of the quality of a quantum reference frame \cite{Gour2009}.
Based on the criterion (ii), it is realized that the commutator $[U(g),\rho]$ characterizes to what extent $\rho$ is asymmetric with respect to $G$
and some matrix norm of $[U(g),\rho]$ should be a reasonable measure of $G$-asymmetry \cite{Marvian2014b}.

Along this line of thought, a new quantifier of asymmetry has recently been proposed by Sun's group, which is defined as an average
of the \textit{fidelity deviations} of Hamiltonian $H$ (or quantum state $\rho$) over a specific symmetric group $g\in G$ \cite{Fang2016}
\begin{align}
\mathcal{A}(G,H)=\frac{1}{4\|H_0\|_\mathrm{F}^2}\int_g\left\|[U(g),H]\right\|_\mathrm{F}^2dg,
\end{align}
where $H_0=H-\mathrm{Tr}H/d$ is a re-biased version of $H$. $\mathcal{A}(G,H)$ is mathematically tractable and also has desirable
properties (e.g., basis-independence), mainly owing to the geometric feature of the Frobenius norm \cite{Bhatia1997}. More importantly,
it is has been demonstrated that $\mathcal{A}(G,H)$ can effectively detect some significant phenomena in condensed matter physics,
such as accidental degeneracy and spontaneous symmetry breaking \cite{Fang2016}. Meanwhile, an analogous measure can be defined
since we are more concerned with how to quantify the asymmetry (symmetry) of quantum states
\begin{align}
\mathcal{A}(G,\rho)&=\frac{1}{4\|\rho\|_\mathrm{F}^2}\int_g\left\|[U(g),\rho]\right\|_\mathrm{F}^2dg,\\
\mathcal{S}(G,\rho)&=1-\mathcal{A}(G,\rho)=\frac{1}{4\|\rho\|_\mathrm{F}^2}\int_g\left\|\{U(g),\rho\}\right\|_\mathrm{F}^2dg.
\end{align}

With this formulation, we can investigate the asymmetry of any quantum state with respect to a specific symmetric group.
Remarkably, while seemingly unrelated, there actually exists a close connection between $\mathcal{A}(G,\rho)$ and $\mathcal{C}(\rho)$.
We first focus on the single qubit system $\rho=(1+\vec{s}\cdot\vec{\sigma})/2$, where $\vec{s}$ is the Bloch vector and
can be parameterized as $\vec{s}=|\vec{s}|(\sin\theta_0\cos\phi_0,\sin\theta_0\sin\phi_0,\cos\theta_0)$
with polar and azimuthal angles $(\theta_0,\phi_0)$. Since the group of all linear unitary operations
over a single qubit is equivalent to a $\mathbb{SU}(2)$ algebra, here what interests us most is to investigate
the performance of $\mathcal{A}(G,\rho)$ under the $\mathbb{SU}(2)$ group. The group element of $\mathbb{SU}(2)$
can be expressed as
\begin{align}
R(\omega,\theta,\phi)=\exp\left(-i\frac{\omega}{2}\hat{n}\cdot\vec{\sigma}\right)
=\left(\begin{array}{cc}
\cos\frac{\omega}{2}-i\sin\frac{\omega}{2} \cos\theta& -i\mathrm{e}^{-i\phi}\sin\frac{\omega}{2}\sin\theta \\
-i\mathrm{e}^{i\phi}\sin\frac{\omega}{2}\sin\theta & \cos\frac{\omega}{2}+i\sin\frac{\omega}{2}\cos\theta
\end{array}\right),
\end{align}
where $\hat{n}$ is a real unit vector and for simplicity we denote the triple as a vector $\vec{\omega}=\{\omega,\theta,\phi\}$.
With the identity $\|[R(\vec{\omega}),\rho]\|_\mathrm{F}^2=2\mathrm{Tr}\rho^2-2\mathrm{Tr}[\rho R(\vec{\omega})\rho R(\vec{\omega})^\dagger]$
and $\|\rho\|_\mathrm{F}^2=\mathrm{Tr}\rho^2$, the quantity $\mathcal{A}(G,\rho)$ for $\mathbb{SU}(2)$ group can be explicitly calculated
\begin{align}
\mathcal{A}(\mathbb{SU}(2),\rho)&=\frac{1}{4\|\rho\|_\mathrm{F}^2}\int_{\vec{\omega}}\|[R(\vec{\omega}),\rho]\|_\mathrm{F}^2d\vec{\omega}
=\frac{1}{2}-\frac{1}{2\mathrm{Tr}\rho^2}\int_{\vec{\omega}}\mathrm{Tr}\left[\rho R(\vec{\omega})\rho R(\vec{\omega})^\dagger\right] d\vec{\omega} \nonumber\\
&=\frac{1}{2}-\frac{1}{2\mathrm{Tr}\rho^2}\frac{1}{4\pi^2}\int_{-\pi}^\pi d\phi \int_0^\pi\sin\theta d\theta
\int_0^{2\pi}\sin^2\frac{\omega}{2} F(\omega,\theta,\phi) d\omega\nonumber\\
&=\frac{1}{2}-\frac{1}{4\mathrm{Tr}\rho^2}=\frac{1}{2}-\frac{1}{2(1+|\vec{s}|^2)} \label{SU2},
\end{align}
where we have used the integral formula for a functional $F(\vec{\omega})$ over the $\mathbb{SU}(2)$ group \cite{Gilmore1974}
\begin{align}
\int d\vec{\omega}F(\vec{\omega})=\frac{1}{4\pi^2}\int_{-\pi}^\pi d\phi \int_0^\pi\sin\theta d\theta
\int_0^{2\pi}\sin^2\frac{\omega}{2} F(\omega,\theta,\phi) d\omega,
\end{align}
and in our case the functional $F(\omega,\theta,\phi)=\mathrm{Tr}[\rho R(\vec{\omega})\rho R(\vec{\omega})^\dagger]$ can be represented as
\begin{align}
F(\omega,\theta,\phi)=\frac{1}{16}\Big\{8+3|\vec{s}|^2+|\vec{s}|^2\left[5\cos\omega+2\sin^2\frac{\omega}{2}(\cos2\theta_0+\cos2\theta+3\cos2\theta\cos2\theta_0)\right] \nonumber\\
+8|\vec{s}|^2\sin^2\frac{\omega}{2}\left[\cos2(\phi-\phi_0)\sin^2\theta\sin^2\theta_0+\cos(\phi-\phi_0)\sin2\theta\sin2\theta_0\right]\Big\}
\end{align}

As expected, the result (\ref{SU2}) indicates that $\mathcal{A}(\mathbb{SU}(2),\rho)$ is a basis-independent quantity, that is,
it does not rely on the initial orientation of $\vec{s}$, although $F(\omega,\theta,\phi)$ explicitly depend on $\theta_0$ and $\phi_0$.
Moreover, when $|\vec{s}|$=0, the initial state $\rho$ is the maximally mixed state $\rho_\star$ and meanwhile
$\mathcal{A}(\mathbb{SU}(2),\rho_\star)=0$. This observation coincides with the fact that for any symmetry and for any
representation of the symmetry, the completely mixed state $\rho_\star$ is invariant under all symmetry
transformations \cite{Marvian2013}. Intriguingly, we build up a relationship between $\mathcal{A}(\mathbb{SU}(2),\rho)$
and $\mathcal{C}(\rho)$ for arbitrary single-qubit system
\begin{align}
\mathcal{A}(\mathbb{SU}(2),\rho)=\frac{1}{2}-\frac{1}{2[1+\mathcal{C}^2(\rho)]}.
\end{align}
Apparently, $\mathcal{A}(\mathbb{SU}(2),\rho)$ is a monotonic increasing function of $\mathcal{C}(\rho)$, which implies that
more intrinsic coherence signifies more asymmetry under the $\mathbb{SU}(2)$ group. Such a simple but novel relation
motivates us to inspect $\mathcal{A}(G,\rho)$ for $N$-qubit states by invoking the tensor product structure
of Hilbert space.

An arbitrary $N$-qubit (pure or mixed) states can be written as
\begin{align}
\rho=\frac{1}{2^N}\sum_{x_1,\ldots,x_N=0}^3T_{x_1\cdots x_N} \sigma_{x_1}^1\otimes\cdots\otimes\sigma_{x_N}^N,
\end{align}
where $\sigma_{x_j}^j$ ($x_j=1,2,3$) are the standard Pauli matrices in the Hilbert space of
qubit $j$ and $\sigma_{0}^j$ is the corresponding identity operator. The set of real coefficients $T_{x_1\cdots x_N}$
with $x_j=1,2,3$ constitutes the so-called correlation matrix $T$. To gain further insight into the property of
$\mathcal{A}(G,\rho)$, we consider the symmetry group $G=\mathbb{SU}(2)\otimes\cdots\otimes\mathbb{SU}(2)$,
where the group element is $R_{\mathrm{ind}}=R_1\otimes\cdots\otimes R_N$ and $R_j\in\mathbb{SU}(2)$ are \textit{independent} local
unitary operations acting on qubit $j$. Similar to the single-qubit case (\ref{SU2}), the key procedure is to calculate
the average of $\mathrm{Tr}[\rho R_{\mathrm{ind}}\rho R_{\mathrm{ind}}^\dagger]$ over $\mathbb{SU}(2)^{\otimes N}$.
Due to the tensor product structure of N-qubit states and the trace formula $\mathrm{Tr}(A\otimes B)=\mathrm{Tr}(A)\mathrm{Tr}(B)$,
we have the relation
\begin{align}
\mathrm{Tr}[\rho R_{\mathrm{ind}}\rho R_{\mathrm{ind}}^\dagger]=\frac{1}{4^N}\sum_{x_1,\ldots,x_N=0}^3\sum_{y_1,\ldots,y_N=0}^3
T_{x_1\cdots x_N}T_{y_1\cdots y_N}\prod_{j=1}^N\mathrm{Tr}(\sigma_{x_j}^jR_j\sigma_{y_j}^jR_j^\dagger).
\end{align}
Therefore, the problem reduces to the evaluation of the integral $\int_{\vec{\omega}}\mathrm{Tr}(\sigma_{x_j}^jR_j\sigma_{y_j}^jR_j^\dagger)d\vec{\omega}$.
In fact, we can prove that only when $x_j=y_j=0$, this integral is nonzero, namely (see Methods section)
\begin{align}
\int_{\vec{\omega}}\mathrm{Tr}(\sigma_{x_j}^jR_j\sigma_{y_j}^jR_j^\dagger)d\vec{\omega}=
\left\{\begin{array}{cc}
2, &  \mathrm{if} \quad x_j=y_j=0\\
0, &  \mathrm{otherwise}
\end{array}\right.
\end{align}
Finally, we obtain the asymmetry of $N$-qubit states under $\mathbb{SU}(2)^{\otimes N}$ in a general form
\begin{align}
\mathcal{A}(\mathbb{SU}(2)^{\otimes N},\rho)&=\frac{1}{2}-\frac{1}{2\mathrm{Tr}\rho^2}
\frac{1}{4^N}\sum_{x_1,\ldots,x_N=0}^3\sum_{y_1,\ldots,y_N=0}^3T_{x_1\cdots x_N}T_{y_1\cdots y_N}
\prod_{j=1}^N\int_{\vec{\omega}}\mathrm{Tr}(\sigma_{x_j}^jR_j\sigma_{y_j}^jR_j^\dagger)d\vec{\omega} \nonumber\\
&=\frac{1}{2}-\frac{1}{2\mathrm{Tr}\rho^2}\frac{1}{4^N}T_{0\cdots 0}T_{0\cdots 0}2^N=\frac{1}{2}-\frac{1}{2^{N+1}\mathrm{Tr}\rho^2} \nonumber\\
&=\frac{1}{2}-\frac{1}{2+2(2^{N}-1)\mathcal{C}^2(\rho)}=\frac{1}{2}-\frac{1}{2+2^{N+1}\mathcal{I}_{BZ}(\rho)},
\end{align}
where the coefficient $T_{0\cdots 0}=\mathrm{Tr}\rho=1$. Obviously, this result reduces to the single-qubit case
for $N=1$ and the maximally mixed state $\rho_\star=\mathbb{I}_{2^N}/2^N$ still corresponds to the minimum
value of $\mathcal{A}(G,\rho)$ ($\mathcal{A}(G,\rho_\star)=0$). In addition to the unambiguous and
monotonous relation between $\mathcal{A}(\mathbb{SU}(2)^{\otimes N},\rho)$ and $\mathcal{C}(\rho)$ (hence $\mathcal{I}_{\mathrm{BZ}}(\rho)$),
it is worthwhile to note that $\mathcal{A}(\mathbb{SU}(2)^{\otimes N},\rho)$ is independent of
the correlation matrix $T$, more precisely, independent of the internal quantum correlations (e.g., entanglement) between subsystems.

Intuitively, one might be tempted to conjecture that \textit{any} local symmetric transformation group $G$ will lead to
a \textit{correlation-independent} measure of asymmetry. However, this is not the case.
To illustrate this, let us take a closer look at local unitary transformations.
Other than the independent local unitary transformations considered above, their \textit{collective}
counterparts also plays a crucial role in quantum information and computation \cite{Bourennane2004}.
In this circumstance, the element of symmetry group is the tensor product of $N$ \textit{identical} unitary transformations $R(\vec{\omega})$,
that is, $R_{\mathrm{col}}=R\otimes\cdots\otimes R$.
Take two-qubit states for example and we obtain (see Methods section for more details)
\begin{align}
\mathcal{A}(R\otimes R,\rho)&=\frac{1}{2}-\frac{1}{2\mathrm{Tr}\rho^2}
\frac{1}{16}\sum_{x_1,x_2=0}^3\sum_{y_1,y_2=0}^3T_{x_1x_2}T_{y_1y_2}
\int_{\vec{\omega}}\mathrm{Tr}(\sigma_{x_1}^1R\sigma_{y_1}^1R^\dagger)\mathrm{Tr}(\sigma_{x_2}^2R\sigma_{y_2}^2R^\dagger)d\vec{\omega} \nonumber\\
&=\frac{1}{2}-\frac{1}{2\mathrm{Tr}\rho^2}\frac{1}{16}\left(4+\frac{4}{3}\sum_{i,j=1}^3T_{ii}T_{jj}\right)
=\frac{1}{2}-\frac{1}{8\mathrm{Tr}\rho^2}\left(1+\frac{1}{3}\sum_{i,j=1}^3T_{ii}T_{jj}\right).
\end{align}
It is evident that the collective effect exhibits a significant influence on the asymmetry measure
and $\mathcal{A}(R\otimes R,\rho)$ indeed relies on the diagonal entries of the correlation matrix $T$,
which are basis-dependent quantities.

\section*{Discussion}
In this work, we first formulate a coherence measure $\mathcal{C}(\rho)$ based on the Frobenius norm,
which is defined from a geometric perspective. Remarkably, we have demonstrated that this quantity
is not only a measure of mixedness, but also an intrinsic (\textit{basis-independent}) quantification
of quantum coherence contained in quantum states, which can also be viewed as a normalized version of Brukner-Zeilinger information \cite{Brukner1999}.
To further illustrate this point, a comparison can be
made between $\mathcal{C}(\rho)$ and the coherence measures proposed by Baumgratz \textit{et al.} \cite{Baumgratz2014}.
For example, for a single-qubit state $\rho=(1+\vec{s}\cdot\vec{\sigma})/2$, those quantifiers
are listed as follows
\begin{align}
\mathcal{C}(\rho)=&|\vec{s}|=\sqrt{s_1^2+s_2^2+s_3^2}, \  C_{R}(\rho)=h\left(\frac{1+s_3}{2}\right)-h\left(\frac{1+|\vec{s}|}{2}\right), \\
C_{l_1}(\rho)&=C_{\mathrm{tr}}(\rho)=\frac{1}{2}(|s_1-is_2|+|s_1-is_2|)=\sqrt{s_1^2+s_2^2}=\sqrt{|\vec{s}|^2-s_3^2},
\end{align}
where $h(x)=-x\log x-(1-x)\log(1-x)$ is the binary entropy function and $C_R(\rho)$ is the relative entropy of coherence.
From the above expressions, it is evident that $C_{R}(\rho)$, $C_{l_1}(\rho)$ and $C_{\mathrm{tr}}(\rho)$
are all basis-dependent measures, since they all depend on $s_3$, which is not a unitarily-invariant quantity.
Moreover, the participation of $s_3$ precisely reflects the choice of the pre-fixed basis. In fact, we can
present an alternative formulation of $\mathcal{C}(\rho)$ based on the eigenvalues $\lambda_j$ of $\rho$,
by noticing that $\lambda_j-1/d$ is the corresponding eigenvalues of $\rho-\rho_\star$, which is a Hermitian and
traceless operator
\begin{align}
\mathcal{C}(\rho)=\sqrt{\frac{d}{d-1}\sum_{j=1}^d(\lambda_j-\frac{1}{d})^2}=\sqrt{\frac{1}{2(d-1)}\sum_{j,k=1}^d(\lambda_j-\lambda_k)^2}.
\end{align}
Note that $\mathcal{C}(\rho)=1$ for any pure state. This result is just a consequence of the basis-dependent property of $\mathcal{C}(\rho)$:
pure states can only be represented as a coherent superposition in \textit{any} pre-fixed basis (without including itself as a base vector), and more precisely,
any pure state can always be connected by a unitary transformation within a given dimension.

Analogous to previous works \cite{Yao2015,Mani2015}, the cohering power $\mathcal{C}(\mathcal{E})$ of quantum channels is introduced in our context.
It is demonstrated that $\mathcal{C}(\mathcal{E})=0$ if and only if $\mathcal{E}(\cdot)$ is a unital quantum channel, i.e.,
doubly-stochastic (bistochastic) completely positive map.
Therefore, the cohering capacity $\mathcal{C}(\mathcal{E})$ can be recognized as a measure of \textit{nonunitality} of quantum channels.
For instance, for some common types of decoherence processes (such as depolarizing and phase damping channels) $\mathcal{C}(\mathcal{E})=0$
since they are unital quantum channels. However, for dissipative channels (e.g., amplitude damping channel)
$\mathcal{C}(\mathcal{E})$ is exactly equal to the damping parameter $\gamma$, which can be thought of as the
probability of losing a photon (here we adopt the notations in ref. \cite{Nielsen2000}). Intrudingly, the
spontaneous emission process is one of the strongest cohering channels, due to the fact that
it maps any state into a certain pure state (e.g., the ground state of a system). Based on the commutator
between quantum state and the elements of a specific symmetric group $G$, we formulate an asymmetry measure $\mathcal{A}(G,\rho)$
by integrating the Frobenius norm of the commutator over this group. Two distinct situations are considered: local
\textit{independent} and \textit{collective} $\mathbb{SU}(2)$ transformations. For $N$-qubit quantum states,
$\mathcal{A}(G,\rho)$ only relies on the purity and is actually equivalent to $\mathcal{C}(\rho)$ (or $I_{BZ}$)
under local independent $\mathbb{SU}(2)$ transformations. However, for local collective $\mathbb{SU}(2)$ transformations,
$\mathcal{A}(G,\rho)$ also depends on the quantum correlations between subsystems. One important open question
is to find a closed expression of $\mathcal{A}(G,\rho)$ for arbitrary $N$-qubit states (especially $N>3$) in this case.

\section*{Methods}
\subsection*{Normalization of $\mathcal{C}(\rho)$ and $\mathcal{C}(\mathcal{E})$}
Here we only focus on $\mathcal{C}(\mathcal{E})$, because $\mathcal{E}(\cdot)$ is a
completely positive and trace preserving map and hence $\mathcal{E}(\rho)$
is also a valid density matrix. Trace-preserving property renders
$\mathrm{Tr}(\mathcal{E}(\rho_\star))=\mathrm{Tr}(\rho_\star)=1$, and we obtain
\begin{align}
\|\mathcal{E}(\rho_\star)-\rho_\star\|_\mathrm{F}^2=\|\mathcal{E}(\rho_\star)\|_\mathrm{F}^2-\frac{1}{d}.
\end{align}
Note that for $p\in[1,+\infty)$ and an arbitrary \textit{non-negative} vector $\vec{x}$, the following inequality holds for vector $p$-norms \cite{Roga2013}
\begin{align}
\|\vec{x}\|_p\leq\|\vec{x}\|_1^{1/p}\|\vec{x}\|_\infty^{(p-1)/p}.
\end{align}
Since the Schatten-$p$ norms of an operator $A$ is equal to its vector $p$-norm of its singular values (e.g., $\|A\|_{p}=\|\vec{s}(A)\|_{p}$),
the above inequality also holds for Schatten-$p$ norms. Therefore, we have
\begin{align}
\|\mathcal{E}(\rho_\star)-\rho_\star\|_\mathrm{F}^2\leq\|\mathcal{E}(\rho_\star)\|_1\|\mathcal{E}(\rho_\star)\|_\infty-\frac{1}{d}
=\|\mathcal{E}(\rho_\star)\|_\infty-\frac{1}{d}\leq1-\frac{1}{d},
\end{align}
where we have used $\|\mathcal{E}(\rho_\star)\|_1=\mathrm{Tr}[\mathcal{E}(\rho_\star)]=1$ due to the positivity of $\mathcal{E}(\cdot)$ and the monotonicity
$\|\mathcal{E}(\rho_\star)\|_\infty\leq\|\mathcal{E}(\rho_\star)\|_1$. It is quite clear that the equality is satisfied only if
the quantum channel $\mathcal{E}(\cdot)$ maps the maximally mixed state $\rho_\star$ to a pure state, which only possesses intrinsic
(e.g., basis-independent) coherence in our context.

\subsection*{Necessity of unital channels}
In ref \cite{Lidar2006}, it is proved that a dynamical semigroup is strictly purity-decreasing if and only if the Lindblad generator is unital.
Here the quantum dynamic semigroup (in the Schr\"{o}dinger picture) is a family of one-parameter linear trace-preserving maps $\{\mathcal{E}_\tau,\tau\geq0\}$
satisfying an additional property except for complete positivity, namely \cite{Alicki2007}
\begin{align}
\mathcal{E}_{\tau_1}\mathcal{E}_{\tau_2}=\mathcal{E}_{\tau_1+\tau_2},
\end{align}
which is called semigroup condition or Markov property. However, we recall that a linear map $\mathcal{E}$ is completely
positive \textit{if and only if} it admits a Kraus representation. Thus in our context we are dealing with a more general class of dynamical maps.
Note that Nielsen and Chuang stated in their book that a quantum process described in terms of an operator-sum
representation cannot necessarily be written down as a master equation, especially for non-Markovian dynamics \cite{Nielsen2000}.
In particular, it is worth pointing out that the technique used in ref. \cite{Lidar2006} does not work for our purpose.
Therefore, the necessity of unitality of operator-sum representation for strictly-decreasing purity is still an important unsolved problem.

Moreover, though we present an elegant proof of the sufficient part, we can reach this conclusion from a different perspective.
For $\alpha\in(0,1)\cup(1,+\infty)$, the quantum $\alpha$-divergence \cite{Rastegin2016}
\begin{align}
D_\alpha(\rho\|\sigma)=\frac{1}{1-\alpha}\left[\mathrm{Tr}(\rho^\alpha\sigma^{1-\alpha})-1\right],
\end{align}
where we assume that $\mathrm{supp}(\rho)\subset\mathrm{supp}(\sigma)$. The monotonicity of $\mathcal{C}(\rho)$ under unital channels
follows from the facts that $D_\alpha(\rho\|\sigma)$ is monotone
under completely positive and trace preserving maps in the range $\alpha\in(0,2]$ \cite{Rastegin2016} and $D_2(\rho\|\rho_\star)=(d-1)\mathcal{C}^2(\rho)$.

\subsection*{Integrals over independent and collective local unitary transformations}
For independent local unitary transformations, the key task is to calculate
the integral $\int_{\vec{\omega}}\mathrm{Tr}(\sigma_{x_j}^jR_j\sigma_{y_j}^jR_j^\dagger)d\vec{\omega}$.
For simplicity, we define $\mathbb{P}_{i,j}=\mathrm{Tr}(\sigma_{i}R\sigma_{j}R^\dagger)$.
A straightforward observation shows that the evaluations of $\mathbb{P}_{i,j}$ can be divided into
four categories
\begin{align}
\int_{\vec{\omega}}\mathbb{P}_{i,j}d\vec{\omega}=
\left\{\begin{array}{cccc}
2, &  \mathrm{if} \quad i=0,j=0\\
0, &  \mathrm{if} \quad i=0,j\neq0\\
0, &  \mathrm{if} \quad i\neq0,j=0\\
0, &  \mathrm{if} \quad i\neq0,j\neq0\\
\end{array}\right.
\end{align}
The first three situations are evident but the last one is not immediately obvious. In order to calculate
$\int_{\vec{\omega}}\mathbb{P}_{i,j}d\vec{\omega}$, the key point is that for any $\sigma_j$
($j=1,2,3$), one can always find an operator $U\in\mathbb{SU}(2)$ such that $U\sigma_jU^\dagger=\sigma_3$.
Moreover, this operator can be absorbed into the average over $\mathbb{SU}(2)$. Therefore, we only
need to consider $\mathbb{P}_{3,3}$. Finally, we can obtain
\begin{align}
\mathbb{P}_{3,3}=\mathrm{Tr}(\sigma_{3}R\sigma_{3}R^\dagger)=2\cos^2\theta+2\sin^2\theta\cos\omega,
\end{align}
where we have used the identity
\begin{align}
R(\vec{\omega})=\exp\left(-i\frac{\omega}{2}\hat{n}\cdot\vec{\sigma}\right)=
\mathrm{e}^{-i\frac{\phi}{2}\sigma_3}\mathrm{e}^{-i\frac{\theta}{2}\sigma_2}\mathrm{e}^{-i\frac{\omega}{2}\sigma_3}
\mathrm{e}^{i\frac{\theta}{2}\sigma_2}\mathrm{e}^{i\frac{\phi}{2}\sigma_3},
\end{align}
and the commutation relations
\begin{align}
\mathrm{e}^{-i\frac{\theta}{2}\sigma_2}\sigma_3\mathrm{e}^{i\frac{\theta}{2}\sigma_2}=\cos\theta\sigma_3+\sin\theta\sigma_1,\,
\mathrm{e}^{-i\frac{\theta}{2}\sigma_2}\sigma_1\mathrm{e}^{i\frac{\theta}{2}\sigma_2}=\cos\theta\sigma_1-\sin\theta\sigma_3.
\end{align}
Finally, for any $i,j=1,2,3$ we have
\begin{align}
\int_{\vec{\omega}}\mathbb{P}_{i,j}d\vec{\omega}=\frac{1}{4\pi^2}\int_{-\pi}^\pi d\phi \int_0^\pi\sin\theta d\theta
\int_0^{2\pi}\sin^2\frac{\omega}{2} (2\cos^2\theta+2\sin^2\theta\cos\omega) d\omega=0.
\end{align}

However, the situation is much more complex when confronted with the collective local unitary transformations.
Take two-qubit state for example, and in this case we should deal with the integral
$\int_{\vec{\omega}}\mathbb{P}_{x_1,y_1}\mathbb{P}_{x_2,y_2}d\vec{\omega}$. Similar analysis leads to
the fact
\begin{align}
\int_{\vec{\omega}}\mathbb{P}_{x_1,y_1}\mathbb{P}_{x_2,y_2}d\vec{\omega}=
\left\{\begin{array}{cccc}
4, &   \quad x_1=x_2=y_1=y_2=0\\
0, &   \exists \  x_j=0 \  \mbox{or} \  y_j=0  \\
\frac{4}{3}, &  \quad x_1=x_2\neq0, y_1=y_2\neq0\\
0, &  \mathrm{otherwise}
\end{array}\right.
\end{align}
Special attention should be paid to the case $x_1=x_2\neq0, y_1=y_2\neq0$, that is
\begin{align}
\int_{\vec{\omega}}\mathbb{P}_{x_1,y_1}\mathbb{P}_{x_2,y_2}d\vec{\omega}=\frac{1}{4\pi^2}\int_{-\pi}^\pi d\phi \int_0^\pi\sin\theta d\theta
\int_0^{2\pi}\sin^2\frac{\omega}{2} (2\cos^2\theta+2\sin^2\theta\cos\omega)^2 d\omega=\frac{4}{3}.
\end{align}
We have also considered the three-qubit state and only present the result here
\begin{align}
\int_{\vec{\omega}}\mathbb{P}_{x_1,y_1}\mathbb{P}_{x_2,y_2}\mathbb{P}_{x_3,y_3}d\vec{\omega}=
&\frac{2^3}{3}(\delta_{x_1,0}\delta_{x_2\neq0,x_3}\delta_{y_1,0}\delta_{y_2\neq0,y_3}
+\delta_{x_2,0}\delta_{x_1\neq0,x_3}\delta_{y_2,0}\delta_{y_1\neq0,y_3}+\delta_{x_3,0}\delta_{x_1\neq0,x_2}\delta_{y_3,0}\delta_{y_1\neq0,y_2}) \nonumber\\
&+2^3\delta_{x_1,0}\delta_{x_2,0}\delta_{x_3,0}\delta_{y_1,0}\delta_{y_2,0}\delta_{y_3,0}+\frac{2^2}{3}\varepsilon_{x_1,x_2,x_3}\varepsilon_{y_1,y_2,y_3},
\end{align}
where $\delta_{i,j}$ and $\varepsilon_{i,j,k}$ denote the Kronecker delta and Levi-Civita symbol, respectively.
However, the generalization to arbitrary $N$-qubit states is still missing, which is left as an open question for further study.

\section*{Acknowledgements}

This research is supported by the National Natural
Science Foundation of China under Grants No. 11421063 and
No. 11534002, the National Basic Research Program of China
under Grants No. 2012CB922104 and No. 2014CB921403.

\section*{Author contributions}

Y.Y. and G.H.D. initiated the research project and established the main results under the guidance of C. P. S.
X.X. joined discussions and provided constructive suggestions.
Y.Y. wrote the manuscript and all authors commented on the manuscript.

\section*{Additional information}

\textbf{Competing financial interests:} The authors declare no competing financial interests.

\end{document}